\pgfplotsset{compat=1.18}
\newcommand{\6}{\mathbf}
\newcommand{\getrand}{\xleftarrow{\smash{\raisebox{-0.4ex}{\tiny$\mathsf{R}$}}}}
\newcommand{\F}{\mathbb{F}_p}
\newcommand{\extdeg}{\mu}
\newcommand{\extdegVOLE}{\rho}
\newcommand{\K}{\mathbb{K}}
\newcommand{\KPolyXvec}[2]{\big(\K\mleft[X\mright]_{\leq#1}\big)^{#2}}
\newcommand{\com}{\mathsf{com}}
\newcommand{\openkey}{\mathsf{key}}
\newcommand{\openproof}{\mathsf{opening}_\evalpt}
\newcommand{\QPoly}[1]{Q\mleft(#1\mright)}
\newcommand{\QPolyX}{Q(X)}
\newcommand{\QPolyOnly}{Q}
\newcommand{\RelPoly}[1]{F\mleft(#1\mright)}
\newcommand{\RelPolyx}{F(\6x)}
\newcommand{\RelPolyOnly}{F}
\newcommand{\numc}{n} 
\newcommand{\witsize}{k} 
\newcommand{\RelPolyVec}[2]{\big(\F\mleft[x_1,\ldots,x_\witsize\mright]_{\leq#1}\big)^{#2}}
\newcommand{\evalpt}{r}
\newcommand{\domaineval}{\Omega}
\newcommand{\WitPoly}[1]{P_w\mleft(#1\mright)}
\newcommand{\WitPolyX}{P_w\mleft(X\mright)}
\newcommand{\WitPolyOnly}{P_w}
\newcommand{\OffPoly}[1]{\tilde{P}_w\mleft(#1\mright)}
\newcommand{\OffPolyX}{\tilde{P}_w\mleft(X\mright)}
\newcommand{\OffPolyOnly}{\tilde{P}_w}
\newcommand{\MaskPoly}[1]{P_u\mleft(#1\mright)}
\newcommand{\MaskPolyX}{P_u\mleft(X\mright)}
\newcommand{\MaskPolyOnly}{P_u}
\newcommand{\wit}{\6w}
\newcommand{\offset}{\boldsymbol{\Delta}_\wit}
\newcommand{\para}{\tau}
\newcommand{\size}[1]{\left\lvert#1\right\rvert} 
\newcommand{\cardinality}[1]{\left\lvert#1\right\rvert} 
\newcommand{\sign}{\sigma}
\newcommand{\signsym}{\sigma_\mathrm{sym}}
\newcommand{\signwit}{\sigma_{\wit}}
\newcommand{\signrel}{\sigma_{\RelPolyOnly}}
\newcommand{\logF}{\log_2 (\fieldsize)}
\newcommand{\Topen}{T_\mathrm{open}}
\newcommand{\nbatch}{\eta}
\newcommand{\BatchMat}{\boldsymbol{\Gamma}}
\newcommand{\seed}{\mathsf{seed}}
\newcommand{\GGMexp}[1]{\mathsf{VC}.\mathsf{exp}\mleft(#1\mright)}
\newcommand{\GGMopen}[1]{\mathsf{VC}.\mathsf{open}\mleft(#1\mright)}
\newcommand{\GGMpart}[1]{\mathsf{VC}.\mathsf{partial}\mleft(#1\mright)}
\newcommand{\GGMverify}[1]{\mathsf{VC}.\mathsf{verify}\mleft(#1\mright)}
\newcommand{\GGMsize}{N} 
\newcommand{\grindw}{w} 
\newcommand{\eset}{\mathcal{E}} 
\newcommand{\RSDP}{$\eset$-\textsf{SDP}\xspace}
\newcommand{\TernarySDP}{\textsf{Ternary-SDP}\xspace}
\newcommand{\CrossSDP}{\textsf{CROSS-SDP}\xspace}
\newcommand{\pcH}{\6H} 
\newcommand{\pcA}{\6A} 
\newcommand{\arow}{\6a} 
\newcommand{\Id}[1]{\61_{#1}}
\newcommand{\nC}{n} 
\newcommand{\kC}{k} 
\newcommand{\rC}{r} 
\newcommand{\synd}{\6s} 
\newcommand{\err}{\6e} 
\newcommand{\fieldsize}{p}
\newcommand{\sizeE}{z}
\newcommand{\checkE}[1]{f_{\eset}\mleft(#1\mright)}
\newcommand{\checkEx}{f_{\eset}(x)}
\newcommand{\checkEOnly}{f_{\eset}}
\newcommand{\inprod}[2]{\langle #1, #2\rangle}
\newcommand{\WAVE}{\textsf{WAVE}\xspace}
\newcommand{\CROSS}{\textsf{CROSS}\xspace}
\newcommand{\FAEST}{\textsf{FAEST}\xspace}
\newcommand{\SDitH}{\textsf{SDitH}\xspace}
\newcommand{\MQOM}{\textsf{MQOM}\xspace}
\newcommand{\LESS}{\textsf{LESS}\xspace}
\newcommand{\Mirath}{\textsf{Mirath}\xspace}
\newcommand{\PERK}{\textsf{PERK}\xspace}
\newcommand{\RYDE}{\textsf{RYDE}\xspace}
\newcommand{\SQISign}{\textsf{SQISign}\xspace}
\newcommand{\sounderr}{\varepsilon}
\newcommand{\EUFCMA}{\texttt{EUF-CMA}\xspace}
\newtheorem{theorem}{Theorem}
\newtheorem{remark}[theorem]{Remark}
\newtheorem{definition}[theorem]{Definition}
\newtheorem{property}{Property}
\newtheorem{proposition}{Proposition}
\newcommand{\logtwo}[1]{\log_2\mleft(#1\mright)}
\begin{document}
\title{TCitH- and VOLEitH-based Signatures\\ from Restricted
Decoding\thanks{{M.} {B.} is supported by the Italian Ministry of University and Research (MUR) under the PRIN 2022 program with projects ``Mathematical Primitives for Post Quantum Digital Signatures'' (CUP I53D23006580001) and ``Post quantum Identification and eNcryption primiTives: dEsign and Realization (POINTER)'' (CUP I53D23003670006), by MUR under the Italian Fund for Applied Science (FISA 2022), project ''Quantum-safe cryptographic tools for the protection of national data and information technology assets`` (QSAFEIT) - No. FISA 2022-00618 (CUP I33C24000520001). 
}}

\author{%
  \IEEEauthorblockN{\textbf{Sebastian Bitzer}$^1$, \textbf{Michele Battagliola}$^2$, \textbf{Antonia Wachter-Zeh}$^1$,   and \textbf{Violetta Weger}$^1$}
  \IEEEauthorblockA{$^1$Technical University of Munich, Germany, $^2$Università Politecnica delle Marche, Italy \\
                    \textit{\{sebastian.bitzer, antonia.wachter-zeh, violetta.weger\}@tum.de}, 
                    \textit{battagliola.michele@proton.me}
                    \vspace{-.5cm}}
}

\maketitle

\begin{abstract}

Threshold-Computation-in-the-Head (TCitH) and VOLE-in-the-Head (VOLEitH), two recent developments of the MPC-in-the-Head (MPCitH) paradigm, have significantly improved the performance of digital signature schemes.
This work embeds the restricted decoding problem within these frameworks:
we propose a structurally simple modeling that achieves competitive signature sizes.
Specifically, by instantiating the restricted decoding problem with the same hardness assumption underlying \CROSS, we reduce sizes by more than a factor of two compared to the NIST submission.
Moreover, we observe that ternary full-weight decoding, closely related to the hardness assumption underlying \WAVE, is a restricted decoding problem.
Using ternary full-weight decoding, we obtain signature sizes comparable to the smallest MPCitH-based candidates in the NIST competition.

\end{abstract}

\section{Introduction}
\textbf{Post-quantum signatures.} 
Following the standardization of the signatures Dilithium (ML-DSA), FALCON (FN-DSA), and SPHINCS\textsuperscript{+} (SLH-DSA), NIST issued a call for additional digital signature schemes in September 2022.
In October 2024, NIST announced that fourteen submissions proceeded to the second round.
Among these, six are based on the Threshold Computation in-the-Head (TCitH) and the Vector Oblivious Linear Evaluation (VOLE) in-the-head (VOLEitH) technique \cite{NISTPQC-ADD-R2:PERK24,NISTPQC-ADD-R2:MIRATH24,NISTPQC-ADD-R2:SDitH24,NISTPQC-ADD-R2:RYDE24,NISTPQC-ADD-R2:FAEST24,NISTPQC-ADD-R2:MQOM24}.
These recent specializations of the Multi-Party Computation (MPC) in-the-Head (MPCitH) paradigm offer significant performance improvements. 

\textbf{MPCitH.} 
Originally proposed in 2007 by Ishai, Kushilevitz, Ostrovsky, and Sahai \cite{STOC:IKOS07}, the MPCitH techniques have reshaped the landscape of post-quantum digital signatures.
In a nutshell, the MPCitH approach constructs digital signatures by combining secure MPC with the Fiat--Shamir transform.
Emulating an MPC protocol, the prover (i.e., the signer) performs computations on shares of a secret.
A zero-knowledge proof of knowledge is obtained by opening a verifier-selected subset of the parties, which is checked for consistency. 

\textbf{TCitH and VOLEitH.} The TCitH framework~\cite{feneuil2025threshold} instantiates the MPCitH paradigm with an $\ell$-out-of-$n$ threshold secret sharing. 
Alternatively, the TCitH framework may be described in the Polynomial Interactive Oracle Proofs (PIOP) formalism, as presented in~\cite{feneuil2024piop}.
This formalism also encompasses the VOLEitH framework, originally introduced as a technique to compile VOLE-based zero-knowledge protocols into publicly verifiable protocols, from which signatures can be obtained via the Fiat--Shamir transform
\cite{C:BBDKOR23}.

\textbf{Code-based Cryptography} refers to a class of cryptographic protocol whose security relies on hard problems in coding theory, such as syndrome decoding or code equivalence. 
In the context of the ongoing NIST competition, six out of the fourteen submissions are based on different coding-theory related assumption, such as code equivalence \cite{NISTPQC-ADD-R2:LESS24}, restricted decoding \cite{NISTPQC-ADD-R2:CROSS24}, Hamming-metric decoding \cite{NISTPQC-ADD-R2:SDitH24}, rank-metric decoding decoding~\cite{NISTPQC-ADD-R2:MIRATH24,NISTPQC-ADD-R2:RYDE24}, and the permuted kernel problem~\cite{NISTPQC-ADD-R2:PERK24}. 
Among these, the last four are based on either TCitH or VOLEitH.

\textbf{Our Contribution.}
In this work, we consider the restricted decoding problem, called \RSDP, underlying the \CROSS signature scheme~\cite{NISTPQC-ADD-R2:CROSS24}.
We present a modeling of \RSDP that allows constructing signatures in the TCitH and VOLEitH frameworks.
By doing so, we reduce \CROSS signature sizes by more than a factor of two, at the cost of a more complex construction.
Further, we consider the problem of ternary full-weight decoding, a special case of the hardness assumption underlying the \WAVE signature scheme~\cite{NISTPQC-ADD-R1:WAVE23,AC:DebSenTil19}.
Here, the proposed modeling achieves signature sizes comparable to the smallest\footnote{Except for \SQISign~\cite{NISTPQC-ADD-R2:SQIsign24}, which is considerably shorter and slower.} Zero-Knowledge-based (ZK-based) schemes of the NIST competition. 

\textbf{Organization.}
\Cref{sec:TCitH} provides a review of the TCitH and VOLEitH frameworks. \Cref{sec:crole} presents the proposed modeling of \RSDP, and \Cref{sec:performance} provides a detailed performance evaluation.
The repository~\cite{ourcode} contains code to reproduce the provided figures and parameters.

\section{TCitH and VOLEitH frameworks}

\begin{figure*}[t]
    \centering


\fbox{
\begin{tabular}{
p{0.6\columnwidth}
>{\centering}p{0.12\columnwidth}
p{0.65\columnwidth}
}
\multicolumn{1}{c}{\textbf{Prover}} &  & \multicolumn{1}{c}{\textbf{ Verifier}} \\[0.4em] 
$\seed \getrand \{0,1\}^\lambda$ &&\\
$\WitPolyOnly, \MaskPolyOnly,\com, \openkey \gets \GGMexp{\seed}$ &&\\
$\offset \gets \wit - \WitPoly{\infty}  \in \F^\witsize$ & $\xlongrightarrow{\com,\,\offset}$&\\
%
%
&&$\BatchMat  \getrand \K^{\nbatch \times \numc}$\\
$\OffPolyX \gets \WitPolyX + X\cdot \offset$&$\xlongleftarrow{\BatchMat}$&\\
$\QPolyX \gets \MaskPolyX + \BatchMat \cdot \RelPolyOnly(\OffPolyOnly)(X)$&$\xlongrightarrow{\QPolyOnly}$&\\
%
%
&&$\evalpt \getrand \domaineval \subseteq \K$\\
$\openproof \gets \GGMopen{\openkey,\evalpt}$  & $\xlongleftarrow{\evalpt}$  &   \\
&$\xlongrightarrow{\openproof}$&$\MaskPoly{\evalpt},\,\WitPoly{\evalpt} \gets \GGMpart{\openproof} $\\
&&$\OffPoly{\evalpt} \gets \evalpt\cdot\offset + \WitPoly{\evalpt}$\\
&&Check $\GGMverify{\com,\openproof,\evalpt}$\\ 
&&Check $\QPoly{\evalpt} = \MaskPoly{\evalpt}+ \BatchMat \cdot \RelPolyOnly(\OffPoly{\evalpt})$\\ 
\end{tabular}
}
    \caption{Five-pass zero-knowledge proof of knowledge.}
    \label{fig:5_pass_piop}
\end{figure*}

We provide a concise overview following the PIOP formalism presented in \cite{feneuil2024piop} used to build TCitH and VOLEitH signatures.

\textbf{Notation.}
$\lambda$ denotes the security parameter; NIST (security) categories 1, 3, and 5 correspond to $\lambda = 143$, $\lambda = 207$, and $\lambda = 272$, respectively.
For an integer $m$, we denote by $[m] =\{1,\ldots,m\}$.
Let $\F$ be the finite field with $\fieldsize$ elements, where $p$ is a prime number, and $\K$ its degree-$\extdeg$ field extension.
We denote by $\F[x_1, \ldots, x_\witsize]_{\leq d}$ the set of polynomials in variables $\6x = (x_1, \ldots, x_\witsize)$ with coefficients in $\F$ and total degree at most $d$.
Vectors are denoted with bold lowercase letters, and matrices with bold uppercase letters. 
Polynomials are denoted in plain lowercase letters, and vectors of polynomials in plain uppercase letters.
For a vector of polynomials in the variable $X$ $P(X)\in\F[X]_{\leq d}^\witsize$, we denote by $P(\infty)\in \F^\witsize$ the vector of leading (degree-$d$) coefficients. We write $x \propto y$ to denote that $x$ is proportional to $y$.

\textbf{TCitH.}
The general zero-knowledge proof of knowledge obtained from the TCitH framework is provided in \Cref{fig:5_pass_piop}.
We give a brief overview below; for further details, see \cite{feneuil2025threshold,NISTPQC-ADD-R2:MQOM24}. 
The prover attempts to convince the verifier that they know a witness $\wit \in \F^\witsize$ satisfying $\RelPoly{\wit} = \60\in\F^\numc$ for a One-Way Function (OWF) $\RelPolyOnly$, described by the system of polynomial relations $\RelPolyOnly = (f_1,\ldots, f_m) \in \RelPolyVec{d}{m}$.
To this end, the prover uses a GGM-based Vector Commitment (VC) scheme~\cite{AC:BBMORR24,C:BBDKOR23} to sample vectors of random polynomials 
\[\MaskPolyOnly \in \KPolyXvec{d-1}{\nbatch}
\quad \text{and}\quad \WitPolyOnly \in \KPolyXvec{1}{\witsize}
\]
with $\WitPoly{\infty} \in \F^\witsize$.
The VC also generates a commitment $\com$ to $\WitPolyOnly, \MaskPolyOnly$ and the opening key $\openkey$.
To embed the witness $\wit$ into $\WitPolyOnly$, the prover computes the offset $\offset = \wit - \WitPoly{\infty} \in \F^\witsize$, for which $\OffPolyX = \WitPolyX + \offset \cdot X$ has leading\footnote{The witness can also be embedded into the constant coefficient; see~\cite{NISTPQC-ADD-R2:MQOM24}.} coefficients $\wit$.
The offset $\offset$ and commitment $\com$ are then sent to the verifier. 
Next, the polynomial relation $\RelPolyOnly$ is evaluated at $\OffPolyX$.
Specifically, for $i\in[\numc]$ and $\ell \leq d$, denote by $f_{i,\ell}(\6x)$ the degree-$\ell$ homogeneous component (i.e. the sum of all the monomials of degree $\ell$) of $f_i(\6x)$ .
Then, $\RelPolyOnly(\OffPolyOnly)(X)$ is defined as
\[
\Big( \sum_\ell X^{d-\ell} f_{1,\ell}(\OffPolyX) ,
\ldots, \sum_\ell X^{d-\ell} f_{\numc,\ell}(\OffPolyX)\Big).
\]
Thus, $\RelPolyOnly(\OffPolyOnly)(\infty) = \RelPoly{\wit}$, i.e., the embedded witness satisfies the polynomial relation $\RelPolyOnly$ if and only if $\RelPolyOnly(\OffPolyOnly)(X)$ has degree at most $d-1$.
Hence, $\RelPolyOnly(\OffPolyOnly)(X)$ consists of $\numc$ polynomials of degree $d-1$.
The number of polynomials is reduced through batching.
To this end, the verifier samples a random matrix $\BatchMat \in \K^{\nbatch \times \numc}$ and the prover computes $\BatchMat \cdot \RelPolyOnly(\OffPolyOnly)$, thereby reducing the number of polynomials from $\numc$ to $\nbatch$ while preserving the degree with overwhelming probability.
The prover masks $\BatchMat\cdot\RelPolyOnly(\OffPolyOnly)(X)$ by computing
\[
\QPolyX =  \MaskPolyX+ \BatchMat\cdot\RelPolyOnly(\OffPolyOnly)(X) ,
\vspace{-1mm}\]
where $\MaskPolyX$ is the length-$\nbatch$ vector of random polynomials of degree $d-1$ generated by the VC scheme.
Sending the masked polynomial vector $\QPolyX$ to the verifier does not reveal any information about the witness.
To verify correctness, the verifier challenges the prover at a random evaluation point $\evalpt$ in the evaluation domain $\domaineval\subseteq \K$ of size $\GGMsize = \cardinality{\domaineval}$.
The prover produces the corresponding opening $\openproof$ using its opening key $\openkey$.
The verifier checks $\openproof$ against $\com$ and, in case of success, computes the evaluations $\WitPoly{\evalpt}\in\K^\nbatch$ and $\MaskPoly{\evalpt}\in\K^\witsize$.
Then, the verifier computes $\MaskPoly{\evalpt} + \BatchMat \cdot \RelPolyOnly(\OffPoly{\evalpt})$ and checks whether it matches $\QPoly{\evalpt}$, where $\QPolyX$ is the polynomial vector received from the prover.

\begin{remark}
In \cite{NISTPQC-ADD-R2:MQOM24}, also a three-pass variant of \Cref{fig:5_pass_piop} is given, which is obtained by skipping the batching step.
This simplifies the protocol, but increases the required communication, which is why we omit this variant.
\end{remark}

\textbf{TCitH signature scheme.}
The proof system above has a soundness error given by
\vspace{-1mm}
\[
\sounderr = \frac{1}{\fieldsize^{\extdeg\cdot \nbatch}} + \left(1-\frac{1}{\fieldsize^{\extdeg\cdot \nbatch}}\right)\frac{d}{\GGMsize}.
\]
\vspace{-1mm}%
For a target security of $\lambda$\,bits, $\para$ parallel repetitions are performed to amplify soundness below $2^{-\lambda}$.
The Fiat--Shamir transform (enhanced by a $\grindw$-bit proof of work, known as grinding) yields a signature scheme with the following properties.

\begin{property}
For grinding parameter $\grindw$, let the TCitH proof system parameters satisfy
\vspace{-2mm}
\[
\GGMsize \leq \fieldsize^\extdeg,
\qquad
\fieldsize^{\extdeg \cdot \nbatch} \geq 2^\lambda,
\quad\text{and}\quad
\left(\frac{\GGMsize}{d}\right)^\tau \geq 2^{\lambda-\grindw}.
\]
\vspace{-1mm}
Then the resulting signature scheme is \EUFCMA-secure~\cite{NISTPQC-ADD-R2:MQOM24}.

Signatures are of the form $\sign = (\signsym, \signwit,\signrel)$, where $\signwit$ corresponds to the witness, $\signrel$ to the OWF $\RelPolyOnly$, and $\signsym$ to the VC scheme.
The components' sizes\footnote{In general, the sizes depend on the specific instantiation of the VC scheme~\cite{EC:GYWZXZ23}, and optimizations not described here can be applied. The formulas given follow~\cite{NISTPQC-ADD-R2:MQOM24}, to which we refer the reader for further details.} are given by
$$
\size{\signsym}= \tau \cdot \lambda\cdot (\log_2 (N) +1) + 5\lambda + 32,
$$
$$
\size{\signwit} = \tau \cdot \witsize \cdot \log_2 (\fieldsize), \text{ } 
\size{\signrel} = \tau \cdot \nbatch \cdot (d-1)  \cdot \extdeg \cdot \log_2 (\fieldsize).\\
$$
\end{property}

\textbf{VOLEitH.}
Instead of performing $\para$ parallel repetitions, VOLEitH employs a technique that allows sampling $\evalpt$ from an evaluation domain $\domaineval$ of size 
\[
\cardinality{\domaineval} = \GGMsize^\para 
\approx 2^\lambda
\leq \cardinality{\K} = \fieldsize^{\extdegVOLE}.
\]
Avoiding parallel repetitions reduces the soundness error, which in turn enables shorter signatures.
However, this comes at the cost of an additional round of interaction between the prover and the verifier to do a a \emph{consistency check}.
For further details, we refer to \cite{NISTPQC-ADD-R2:FAEST24,C:BBDKOR23}.

By parameterizing the consistency check with $B$ and the state-of-the-art VC scheme~\cite{AC:BBMORR24} with $\Topen$, a signature scheme with the following properties can be constructed.

\begin{property}
Let the VOLEitH proof system be parameterized such that
\begin{align*}
\frac{\GGMsize^\tau}{d} \geq 2^{\lambda-\grindw}
\quad
\text{and}
\quad
\fieldsize^\rho \geq 2^\lambda, \\
\text{with}
\quad
\extdegVOLE = \para \cdot \left\lceil\log_{\fieldsize}( \GGMsize)\right\rceil,
\quad
B = \left\lceil\frac{16}{\logF}\right\rceil.
\end{align*}
Then the resulting signature scheme is \EUFCMA-secure~\cite{NISTPQC-ADD-R2:FAEST24}, and its signatures have size\footnote{The provided formulas follow the specific design choices made in~\cite{NISTPQC-ADD-R2:SDitH24}.}
$\size{\sign} = \size{\signsym}+ \size{\signwit} + \size{\signrel}$, where
\begin{align*}
\size{\signsym} &= \tau \cdot (\extdegVOLE + B) \logF + \tau \cdot 2\lambda + \Topen \cdot \lambda + 4\lambda + 32,\\
\size{\signwit} &= \tau \cdot \witsize \cdot \logF, \\
\size{\signrel} &= \tau \cdot (d-1)  \cdot \extdegVOLE \cdot \logF.
\end{align*}    
\end{property}
Similar to the TCitH framework, which requires $\nbatch  \cdot \extdeg \cdot \logF \geq \lambda$, the VOLEitH framework requires $\extdegVOLE \cdot \logF \geq \lambda$.
However, VOLEitH allows for a smaller $\para$, owing to its reduced soundness error, arguably at the cost of a more complex scheme.

\label{sec:TCitH}

\begin{table*}[tb]
    \centering
    \caption{Parameters of signatures based on \RSDP obtained via the TCitH framework. 
    The proof system parameters are selected analogously to those in \MQOM~\cite{NISTPQC-ADD-R2:MQOM24} to ensure comparable runtime performance of the VC scheme.}
    \label{tab:sizes_TCitH}
    
{\setlength{\tabcolsep}{4pt}
\begin{tabular}{
  c 
  S[table-format=3.0] 
  S[table-format=1.0] 
  S[table-format=4.0] 
  S[table-format=3.0] 
  S[table-format=2.0] 
  S[table-format=4.0]  
  S[table-format=1.0]  
  S[table-format=2.0]  
  S[table-format=2.0]  
  c 
  S[table-format=5.0] 
}

\toprule
\multirow{2}{*}{\begin{tabular}{c} Security\\ category \end{tabular}}
 &
 \multicolumn{4}{c}{\RSDP}
 & \multicolumn{5}{c}{Proof System}
 & \multicolumn{2}{c}{Performance}\\
 \cmidrule(lr){2-5}\cmidrule(lr){6-10} \cmidrule(lr){11-12}
 &
{$\cardinality{\F}$} &
{$\cardinality{\eset}$} &
{$\nC$} &
{$\kC$} &
{$\para$} &
{$\GGMsize$} &
{$\extdeg$} &
{$\nbatch$} &
{$\grindw$} &
{opt.} &
{$\size{\sign}$ in [\si{\byte}]} \\
\midrule
\multirow{4}{*}{NIST 1} 
%
& 
{\multirow{2}{*}{\begin{tabular}{S[table-format=3.0]} 127\end{tabular}}} & 
{\multirow{2}{*}{\begin{tabular}{S[table-format=1.0]}   7\end{tabular}}} & 
{\multirow{2}{*}{\begin{tabular}{S[table-format=4.0]} 127\end{tabular}}} & 
{\multirow{2}{*}{\begin{tabular}{S[table-format=3.0]}  76\end{tabular}}} & 
24 & 256 & 2 & 10 & 4 & fast & 7650 \\
&&&&&
15 & 2048 & 2 & 10 & 6 & short & 5533 \\
\cmidrule(lr){2-12}
%
&
{\multirow{2}{*}{\begin{tabular}{S[table-format=3.0]}   3\end{tabular}}} & 
{\multirow{2}{*}{\begin{tabular}{S[table-format=1.0]}   2\end{tabular}}} & 
{\multirow{2}{*}{\begin{tabular}{S[table-format=4.0]} 579\end{tabular}}} & 
{\multirow{2}{*}{\begin{tabular}{S[table-format=3.0]}  213\end{tabular}}} & 
17 & 256 & 6 & 14 & 9 & fast & 3533 \\
&&&&&
12 & 2048 & 7 & 12 & 8 & short & 3095 \\
\midrule
\multirow{4}{*}{NIST 3} 
%
& 
{\multirow{2}{*}{\begin{tabular}{S[table-format=3.0]} 127\end{tabular}}} & 
{\multirow{2}{*}{\begin{tabular}{S[table-format=1.0]}   7\end{tabular}}} & 
{\multirow{2}{*}{\begin{tabular}{S[table-format=4.0]} 187\end{tabular}}} & 
{\multirow{2}{*}{\begin{tabular}{S[table-format=3.0]}  111\end{tabular}}} & 
36 & 256 & 2 & 14 & 6 & fast & 16675 \\
&&&&&
23 & 2048 & 2 & 14 & 4 & short & 12354 \\
\cmidrule(lr){2-12}
%
&
{\multirow{2}{*}{\begin{tabular}{S[table-format=3.0]}   3\end{tabular}}} & 
{\multirow{2}{*}{\begin{tabular}{S[table-format=1.0]}   2\end{tabular}}} & 
{\multirow{2}{*}{\begin{tabular}{S[table-format=4.0]} 839\end{tabular}}} & 
{\multirow{2}{*}{\begin{tabular}{S[table-format=3.0]}  309\end{tabular}}} & 
27 & 256 & 6 & 21 & 3 & fast & 8284 \\
&&&&&
18 & 2048 & 7 & 18 & 12 & short & 6860 \\
\midrule
\multirow{4}{*}{NIST 5} 
%
& 
{\multirow{2}{*}{\begin{tabular}{S[table-format=3.0]} 127\end{tabular}}} & 
{\multirow{2}{*}{\begin{tabular}{S[table-format=1.0]}   7\end{tabular}}} & 
{\multirow{2}{*}{\begin{tabular}{S[table-format=4.0]} 251\end{tabular}}} & 
{\multirow{2}{*}{\begin{tabular}{S[table-format=3.0]}  150\end{tabular}}} & 
48 & 256 & 2 & 19 & 7 & fast & 29839 \\
&&&&&
31 & 2048 & 2 & 19 & 3 & short & 22305 \\
\cmidrule(lr){2-12}
%
&
{\multirow{2}{*}{\begin{tabular}{S[table-format=3.0]}   3\end{tabular}}} & 
{\multirow{2}{*}{\begin{tabular}{S[table-format=1.0]}   2\end{tabular}}} & 
{\multirow{2}{*}{\begin{tabular}{S[table-format=4.0]} 1102\end{tabular}}} & 
{\multirow{2}{*}{\begin{tabular}{S[table-format=3.0]}  406\end{tabular}}} & 
36 & 256 & 6 & 27 & 4 & fast & 14584 \\
&&&&&
25 & 2048 & 7 & 24 & 6 & short & 12608 \\
\bottomrule
\end{tabular}
}
    \vspace{-0.15cm}
\end{table*}

\section{Modeling Restricted Decoding Problems}\label{sec:crole}
We now present a polynomial modeling for \RSDP that can be utilized in the TCitH and VOLEitH frameworks. 
First, we recap some basic notions from coding theory.

\textbf{Coding theory.}
A \textit{linear code} over $\F$ of length $\nC$ is a $\kC$-dimensional subspace of $\F^\nC$. 
The code can be represented by a \textit{parity-check matrix} $\pcH \in \F^{(\nC-\kC) \times \nC}$ such that $\mathbf{c}\pcH^\top  = \60$ for every codeword $\mathbf{c}$. 
We denote by $\rC = \nC - \kC$ the redundancy of the code.
The \textit{syndrome} of a vector $\6v$ (through $\pcH$) is $\synd = \6v\pcH^\top \in \F^\rC$. 
The \textit{Hamming weight} (from now on, only \emph{weight}) of a vector $\6v$ is the number of non-zero entries in $\mathbf{v}$.

\textbf{Restricted decoding assumption.}
The (general) restricted Syndrome Decoding Problem (\RSDP) \cite{baldi2025new} replaces the Hamming-weight constraint of the usual SDP with the requirement that each entry must lie in the fixed subset $\eset$ of the underlying finite field $\F$.
It is defined as follows:

\begin{definition}[\RSDP]\label{RSDP}
Let the restriction $\eset \subset  \F$, the parity-check matrix $\pcH \in \F^{\rC\times \nC}$, and the syndrome $\synd \in \F^{\rC}$ be given. 
Solving \RSDP requires to find $\err \in \eset^\nC$ such that $\err \pcH^\top = \synd$.
\end{definition}
We refer to the size of the restriction as $\sizeE=\cardinality{\eset}$.
Different choices of $\eset$ are possible, and all of them lead to an NP-complete problem~\cite{PKC:BBPSWW24}.
In this work, we mainly focus on two specific choices that have undergone careful analysis.

\paragraph{CROSS restriction}
The signature scheme \CROSS~\cite{NISTPQC-ADD-R2:CROSS24} uses the restriction $\eset=\{2^i \mid i \in [7] \}$ within $\F$ of size $\fieldsize=127$.
We are going to refer to this setting as \CrossSDP, the concrete hardness of which is analyzed in detail in~\cite{cross:secguide} and confirmed in~\cite{beullens2024security}. 
For instance, the parameters $\nC=127$ and $\kC=76$ provide a suitable choice for NIST security category 1.

\paragraph{Ternary decoding with full weight}
The signature scheme \WAVE~\cite{AC:DebSenTil19,NISTPQC-ADD-R1:WAVE23} relies on the ternary syndrome decoding with high weight.
For state-of-the-art solvers~\cite{SAC:BCDL19}, parameterizing the problem such that the error is full-weight and unique, maximizes the concrete hardness (a setting in which \WAVE cannot sign successfully).
This corresponds to \RSDP with $\eset = \{1,2\}$ and $\fieldsize = 3$, to which we are going to refer as \TernarySDP.
When polynomial factors are conservatively ignored, the solver presented in~\cite{SAC:BCDL19} requires $2^{0.247\cdot n}$ operations for $\kC \approx 0.369\cdot \nC$, which implies uniqueness of the solution.
In particular, the parameters $\nC = 579$ and $\kC = 213$ are shown to achieve NIST security category 1. 

\paragraph{General \RSDP}
In general, any choice of restriction requires a careful analysis of the problem's concrete computational hardness.
This makes suggesting parameters difficult if one deviates from the settings described above.
Nevertheless, we are interested in assessing whether other parameterizations can yield an improved performance in combination with the proposed modeling.
An upper bound on the cost of solving a general \RSDP instance is provided by Stern's algorithm, as adapted in \cite{PKC:BBPSWW24}.
The following proposition provides a tight closed-form approximation for the cost of this solver.

\begin{proposition}\label{prop:stern}
Let $Z = \logtwo{\sizeE}$, $P=\logF$, and $k/n = 1-Z/P$. 
Then, conservatively ignoring polynomial factors and memory cost, Stern's solver has time complexity $C$ with 
\begin{equation}\label{eq:stern}
\logtwo{C} \approx \frac{P-Z}{2\cdot P - Z}\cdot Z\cdot \nC.
\end{equation}
\end{proposition}
\begin{proof}
For $k/n = 1-Z/P$, the solution $\err$ is unique with good probability~\cite{cross:secguide}.
Following~\cite{PKC:BBPSWW24}, the cost of Stern's solver is 
\[
\min_{0 \leq \ell \leq \rC} \left\{\sizeE^{\frac{\kC+\ell}{2}} + \frac{\sizeE^{\kC+\ell}}{\fieldsize^\ell} \right\},
\]
when polynomial factors and memory cost are neglected.
The optimal $\ell$ balances $\sizeE^{\frac{\kC+\ell}{2}} = \frac{\sizeE^{\kC+\ell}}{\fieldsize^\ell}$; this is achieved for $\kC/\ell = 2P/Z-1$.
\Cref{eq:stern} follows from $C \approx \sizeE^{\frac{\kC+\ell}{2}} = \fieldsize^\ell$.
\end{proof}
Progress in cryptanalysis of \RSDP would require increased $\nC$.
 As in the case of \cite{AC:BFGNR24}, since the signature size is typically dominated by $\signsym$ for the modeling we propose, this would have only a minor impact on performance in the TCitH and VOLEitH frameworks.

\textbf{Modeling \RSDP.}
The linearity of the syndrome equations allows a standard optimization that reduces the witness size from $\nC$ to $\kC$. 
We assume that the parity-check matrix is in systematic form
\footnote{This assumption does not impact security.} $\pcH = (\pcA, \Id{\rC})$, where $\pcA\in\F^{\rC\times \kC}$ and $\Id{\rC} \in \F^{\rC\times\rC}$ denotes the identity matrix.
Then, $\err \in \F^n$ satisfies $\err\pcH^\top = \synd$ if and only if it is of the form $\err = (\wit, \synd - \wit\pcA^\top)$ with $\wit\in\F^k$.
That is, the witness is defined as a length-$\kC$ partial error vector that is expanded to a length-$\nC$ error vector so that the syndrome equations are satisfied.

To check that the entries of the expanded witness satisfy the restriction\footnote{$\checkEOnly$ is also used to model \RSDP as a system of equations that can be attacked using algebraic solvers~\cite{beullens2024security,NISTPQC-ADD-R2:CROSS24}.} 
define $\checkEx \coloneqq \prod_{e\in\eset} (x-e)$.
Then, $\alpha \in \K$ is in $\eset$ if and only if $\checkE{\alpha} = 0$.

Overall, we obtain the system of  polynomial constraints $\RelPolyx = (f_1, \ldots, f_\nC) \in \F[x_1,\ldots,x_\kC]^\nC$ with
\begin{align}
f_i(\6x) &= \checkE{x_i}\quad \text{for }i \in [\kC], \notag\\
\text{and} \quad 
f_{\kC+i}(\6x) &= \checkE{s_i - \inprod{\arow_i}{\6x}}\quad \text{for }i \in [\rC], \label{eq:modelling_RSDP}
\end{align}
where $\arow_i \in \F^k$ denotes the $i$-th row of $\pcA$. 
The correctness and degree of this modeling are stated in the following proposition.

\begin{proposition}
The polynomial relation $\RelPolyOnly$ given in \Cref{eq:modelling_RSDP} provides a degree-$\sizeE$ modeling of \RSDP:
 $\wit \in \F^\kC$ satisfies $\RelPoly{\wit} = \60$, if and only if the error vector $\err = (\wit, \synd - \wit \pcA^\top)$ solves the \RSDP instance $(\pcH,\synd)$ with $\pcH = (\pcA, \Id{\rC})$.
\end{proposition}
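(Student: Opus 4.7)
The plan is to verify two independent claims in sequence: the degree bound on $\RelPolyOnly$, and the soundness of the modeling. Both rely only on the defining property $\checkE{\alpha}=0 \Leftrightarrow \alpha\in\eset$ of $\checkEx = \prod_{e\in\eset}(x-e)$, so no major obstacle is anticipated; the work is essentially bookkeeping.

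For the degree claim, I would first note that $\checkEOnly$ is univariate of degree exactly $\sizeE$. The first $\kC$ constraints $f_i(\6x)=\checkE{x_i}$ then trivially have total degree $\sizeE$ as they depend on a single variable. For $i\in[\rC]$, the argument $s_i-\inprod{\arow_i}{\6x}$ is an affine form in $\6x$, so substituting it into $\checkEOnly$ yields a multivariate polynomial whose total degree equals that of $\checkEOnly$, namely $\sizeE$. The main thing to watch is that, although $\kC$ variables are involved, composing with an affine form does not multiply degrees; this is the only non-routine point, and it holds precisely because the inner form is linear.

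For correctness, I would assume $\RelPoly{\wit}=\60$ and check that $\err=(\wit,\synd-\wit\pcA^\top)$ satisfies both conditions of \Cref{RSDP}. Coordinate-wise membership $\err\in\eset^\nC$ follows directly: the first $\kC$ equations give $w_i\in\eset$, and the remaining $\rC$ equations give that each entry of $\synd-\wit\pcA^\top$ lies in $\eset$, using the identification $(\wit\pcA^\top)_i = \inprod{\arow_i}{\wit}$. The syndrome identity $\err\pcH^\top=\synd$ is then a one-line computation exploiting the block form $\pcH=(\pcA,\Id{\rC})$:
\[
\err\pcH^\top = \wit\pcA^\top + (\synd-\wit\pcA^\top) = \synd.
\]
I would not include the (easy) converse direction, since the proposition only asserts the stated implication; mentioning it briefly in passing would however make clear that \Cref{eq:modelling_RSDP} is not merely sound but an exact reformulation of \RSDP.
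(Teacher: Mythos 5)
Your proposal is correct and follows essentially the same route as the paper's proof: the degree bound via composing $\checkEOnly$ with the affine forms, the syndrome identity from the block structure $\pcH=(\pcA,\Id{\rC})$, and entry-wise membership in $\eset$ from the root property of $\checkEOnly$. No substantive differences to report.
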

\begin{proof}
For $i\in[\kC]$, $f_i(\6x)$ has degree $\sizeE$.
The composition of a degree-$\sizeE$ polynomial and a linear polynomial is also of degree $\sizeE$; and, thus, $f_{\kC+i}(\6x)$ and $\RelPolyx$ have degree $\sizeE$ as well.
Let $\wit$ satisfy $\RelPoly{\wit} = \60$.
Then, for $\err = (\wit, \synd - \wit \pcA^\top)$, it holds
\[
\err \pcH^\top = (\wit, \synd - \wit \pcA^\top) (\pcA, \Id{\rC})^\top = \wit \pcA^\top + \synd - \wit \pcA^\top = \synd,
\]
i.e., the syndrome equations are satisfied.
Further, $\checkE{e_i} = 0$ for all entries of $\err$, i.e., $\err\in\eset^\nC$. 
\end{proof}

\textbf{\RSDP-enabled tradeoffs.}
For the proposed modeling, the signature sizes in both TCitH and VOLEitH scale as $\size{\signrel} \propto \sizeE$ and $\size{\signwit} \propto \kC\cdot \logF$, while $\signsym$, composed of symmetric primitives, is largely independent of the modeling.
\Cref{fig:tradeoff} shows the tradeoff between $\kC\cdot\logF$ and $\sizeE$ which can be achieved by different parameterizations of \RSDP.
The solver cost is approximated by \Cref{prop:stern}.
This approximation is tight for both \TernarySDP and \CrossSDP, parametrized according to the careful analysis in \cite{SAC:BCDL19} and \cite{cross:secguide}, respectively.
We observe that only a minor improvement in witness size may be possible by increasing the degree of the relation (i.e., the size of $\eset$).
For fixed $\sizeE$, smaller field sizes $\fieldsize$ are preferable. 
In the following, we focus on \CrossSDP and \TernarySDP; by \Cref{fig:tradeoff}, the latter is expected to yield the smallest \RSDP-based signatures.

\begin{figure}
    \centering
    \begin{tikzpicture}[baseline] 
\begin{axis}[%
ymax = 1000,
xmin = 1.9,
xmax = 7.1,
label style={font=\footnotesize},
ymajorgrids,
xmajorgrids,
xminorgrids,
yminorgrids,
grid style=dashed,
legend style={at={(1.05,0.35)},anchor=west,legend cell align=left, align=left, draw=black, legend image post style={xscale=0.5,yscale = 0.5, mark size=5pt}, font=\scriptsize},
ylabel near ticks,
xlabel near ticks,
width = 0.7*\columnwidth,
height = 3.6cm,
ylabel={$\kC \cdot \logF$},
xlabel={degree $\sizeE$},
x tick label style={font=\scriptsize},
y tick label style={font=\scriptsize},
legend columns=1,
]

\addplot [only marks, CornflowerBlue, mark = triangle*] 
  table[row sep=crcr]{
2 338\\ 
};
\addlegendentry{$\TernarySDP$} 

\addplot [only marks, BrickRed, mark = otimes] 
  table[row sep=crcr]{
7 532 \\ 
};
\addlegendentry{$\CrossSDP$} 

\addplot [only marks, mark=x, thick] 
  table[row sep=crcr]{
2 310 \\ 
};
\addlegendentry{$\fieldsize = 3$} 

\addplot [thick] 
  table[row sep=crcr]{
2 521 \\ 
3 277 \\ 
4 189 \\ 
};
\addlegendentry{$\fieldsize = 5$} 

\addplot [dashed, thick] 
  table[row sep=crcr]{
2 1026 \\ 
3 593 \\ 
4 442 \\ 
5 360 \\ 
6 307 \\ 
7 274 \\ 
8 246 \\ 
9 225 \\ 
};
\addlegendentry{$\fieldsize = 17$} 



\addplot [dashdotted, thick] 
  table[row sep=crcr]{
2 1554 \\ 
3 926 \\ 
4 706 \\ 
5 588 \\ 
6 511 \\ 
7 457 \\ 
8 422 \\ 
9 392 \\ 
};
\addlegendentry{$\fieldsize = 61$} 

\addplot [densely dotted, thick] 
  table[row sep=crcr]{
2 1853 \\ 
3 1119 \\ 
4 853 \\ 
5 713 \\ 
6 629 \\ 
7 567 \\ 
8 525 \\ 
9 483 \\ 
};
\addlegendentry{$\fieldsize = 127$} 

  \draw[->, ForestGreen, thick] (axis cs:6.6,900) -- (axis cs:5.6,750)
    node[midway, sloped, above] {\scriptsize better};

\end{axis}
\end{tikzpicture}
    \caption{Tradeoff between degree and witness size. Black lines correspond to \RSDP parameters achieving NIST security category 1, as per \Cref{eq:stern}. \TernarySDP and \CrossSDP are parametrized according to \cite{SAC:BCDL19} and \cite{cross:secguide}.}
    \label{fig:tradeoff}
    \vspace{-3mm}
\end{figure}

\begin{table*}[tb]
    \centering
    \caption{Parameters of signatures based on \RSDP obtained via the VOLEitH framework. 
    The proof system parameters are selected analogously to those in \SDitH~\cite{NISTPQC-ADD-R2:SDitH24} to ensure comparable runtime performance of the VC scheme.}
    \label{tab:sizes_VOLEitH}
    
{\setlength{\tabcolsep}{4pt}
\begin{tabular}{
  c 
  S[table-format=3.0] 
  S[table-format=1.0] 
  S[table-format=4.0] 
  S[table-format=3.0] 
  S[table-format=2.0] 
  S[table-format=4.0]  
  S[table-format=3.0]  
  S[table-format=3.0]  
  S[table-format=2.0]  
  c 
  S[table-format=5.0] 
}

\toprule
\multirow{2}{*}{\begin{tabular}{c} Security\\ category \end{tabular}}
 &
 \multicolumn{4}{c}{\RSDP}
 & \multicolumn{5}{c}{Proof System}
 & \multicolumn{2}{c}{Performance}\\
 \cmidrule(lr){2-5}\cmidrule(lr){6-10} \cmidrule(lr){11-12}
 &
{$\cardinality{\F}$} &
{$\cardinality{\eset}$} &
{$\nC$} &
{$\kC$} &
{$\para$} &
{$\GGMsize$} &
{$\extdegVOLE$} &
{$\Topen$} &
{$\grindw$} &
{opt.} &
{$\size{\sign}$ in [\si{\byte}]} \\
\midrule
\multirow{4}{*}{NIST 1} 
%
& 
{\multirow{2}{*}{\begin{tabular}{S[table-format=3.0]} 127\end{tabular}}} & 
{\multirow{2}{*}{\begin{tabular}{S[table-format=1.0]}   7\end{tabular}}} & 
{\multirow{2}{*}{\begin{tabular}{S[table-format=4.0]} 127\end{tabular}}} & 
{\multirow{2}{*}{\begin{tabular}{S[table-format=3.0]}  76\end{tabular}}} & 
16 & 256 & 32 & 101 & 3 & fast & 6432 \\
&&&&&
11 & 2048 & 22 & 107 & 10 & short & 4372 \\
\cmidrule(lr){2-12}
%
&
{\multirow{2}{*}{\begin{tabular}{S[table-format=3.0]}   3\end{tabular}}} & 
{\multirow{2}{*}{\begin{tabular}{S[table-format=1.0]}   2\end{tabular}}} & 
{\multirow{2}{*}{\begin{tabular}{S[table-format=4.0]} 579\end{tabular}}} & 
{\multirow{2}{*}{\begin{tabular}{S[table-format=3.0]} 213\end{tabular}}} & 
16 & 256 & 96 & 101 & 1 & fast & 3515 \\
&&&&&
11 & 2048 & 81 & 107 & 8 & short & 2974 \\
\midrule
\multirow{4}{*}{NIST 3} 
%
& 
{\multirow{2}{*}{\begin{tabular}{S[table-format=3.0]} 127\end{tabular}}} & 
{\multirow{2}{*}{\begin{tabular}{S[table-format=1.0]}   7\end{tabular}}} & 
{\multirow{2}{*}{\begin{tabular}{S[table-format=4.0]} 187\end{tabular}}} & 
{\multirow{2}{*}{\begin{tabular}{S[table-format=3.0]}  111\end{tabular}}} & 
24 & 256 & 48 & 153 & 3 & fast & 14359 \\
&&&&&
16 & 4096 & 32 & 157 & 3 & short & 9361 \\
\cmidrule(lr){2-12}
%
&
{\multirow{2}{*}{\begin{tabular}{S[table-format=3.0]}   3\end{tabular}}} & 
{\multirow{2}{*}{\begin{tabular}{S[table-format=1.0]}   2\end{tabular}}} & 
{\multirow{2}{*}{\begin{tabular}{S[table-format=4.0]} 839\end{tabular}}} & 
{\multirow{2}{*}{\begin{tabular}{S[table-format=3.0]}  309\end{tabular}}} & 
24 & 256 & 144 & 153 & 1 & fast & 7816 \\
&&&&&
16 & 4096 & 128 & 157 & 1 & short & 6463 \\
\midrule
\multirow{4}{*}{NIST 5} 
%
& 
{\multirow{2}{*}{\begin{tabular}{S[table-format=3.0]} 127\end{tabular}}} & 
{\multirow{2}{*}{\begin{tabular}{S[table-format=1.0]}   7\end{tabular}}} & 
{\multirow{2}{*}{\begin{tabular}{S[table-format=4.0]} 251\end{tabular}}} & 
{\multirow{2}{*}{\begin{tabular}{S[table-format=3.0]}  150\end{tabular}}} & 
32 & 256 & 64 & 206 & 3 & fast & 25573 \\
&&&&&
21 & 4096 & 42 & 216 & 7 & short & 16589 \\
\cmidrule(lr){2-12}
%
&
{\multirow{2}{*}{\begin{tabular}{S[table-format=3.0]}   3\end{tabular}}} & 
{\multirow{2}{*}{\begin{tabular}{S[table-format=1.0]}   2\end{tabular}}} & 
{\multirow{2}{*}{\begin{tabular}{S[table-format=4.0]} 1102\end{tabular}}} & 
{\multirow{2}{*}{\begin{tabular}{S[table-format=3.0]}  406\end{tabular}}} & 
32 & 256 & 192 & 206 & 1 & fast & 13851 \\
&&&&&
21 & 4096 & 168 & 216 & 5 & short & 11521 \\
\bottomrule
\end{tabular}
}
    \vspace{-0.15cm}
\end{table*}

\section{Concrete Performance Evaluation}\label{sec:performance}

\textbf{Performance TCitH.}
From the provided modeling of \RSDP, a digital signature is derived using the TCitH framework~\cite{feneuil2025threshold} and the Fiat-Shamir Transform, as summarized in \Cref{sec:TCitH}.
The chosen parameters and the obtained signature sizes are provided in \Cref{tab:sizes_TCitH}.
The parameters $\para$, $\GGMsize$, and $\grindw$ of the proof system are selected following \cite{NISTPQC-ADD-R2:MQOM24} to allow for comparable runtime performance of the VC scheme.
To simplify implementation, the smallest extension degree $\extdeg = [\K:\F]$ satisfying the requirement $\GGMsize \leq \cardinality{\K} = \cardinality{\F}^\extdeg$ is chosen.
This determines the batching parameter $\nbatch$.

For \TernarySDP, parameterized to achieve NIST security category 1, a signature size as small as \SI{3.1}{\kilo\byte} is possible.
Switching to \CrossSDP, the signature size increases to \SI{5.5}{\kilo\byte}.
This is mainly due to the increased degree of the modeling and a larger witness size.
Nevertheless, we obtain a significant reduction compared to \CROSS~\cite{NISTPQC-ADD-R2:CROSS24}, for which the smallest signature size is \SI{12.4}{\kilo\byte}.

\textbf{Performance VOLEitH.}
\Cref{tab:sizes_VOLEitH} provides an overview of the performances that can be achieved if TCitH is replaced by the VOLEitH framework.
Parameters of the proof system are chosen analogously to \SDitH~\cite{NISTPQC-ADD-R2:SDitH24}.
For \TernarySDP, only a minor improvement in signature size is achievable.
A similar effect is observed in \cite{NISTPQC-ADD-R2:MQOM24}: for OWFs with small degree and small witness size, the structurally simpler TCitH framework remains competitive.
On the other hand, for \CrossSDP, which has a degree of $d = \sizeE = 7$, VOLEitH enables a reduction in signature size of about \SI{1}{\kilo\byte} for NIST category 1, and of up to \SI{6}{\kilo\byte} for NIST category 5.

\textbf{Comparison with NIST submissions.}
A comparison of the achieved signature sizes with other ZK-based signatures submissions to NIST's call for additional digital signature schemes~\cite{NISTcall} is provided in \Cref{tab:comparison}.
As can be seen, the TCitH and VOLEitH frameworks can be combined with a wide range of OWFs.
The proposed modeling of \RSDP is competitive compared to the state-of-the-art: for the \TernarySDP, we obtain a signature size which is very close to those of the smallest proposals that utilize the TCitH or the VOLEitH framework with an alternative hardness assumption.
The sizes obtained from \CrossSDP are slightly larger than those of \SDitH~\cite{NISTPQC-ADD-R2:SDitH24} and \PERK~\cite{NISTPQC-ADD-R2:RYDE24}, which also use modelings of degree greater than two, but smaller than $\sizeE = 7$.
The code-based proposals \CROSS and \LESS target other trade-offs:
\CROSS uses a simple CVE-like protocol \cite{CayVerElY10,NISTPQC-ADD-R2:CROSS24}, and \LESS achieves a small signature size by using multiple public keys (resulting in a public-key size that is multiple times larger than those of the other schemes shown)~\cite{NISTPQC-ADD-R2:LESS24}.
\SQISign~\cite{NISTPQC-ADD-R2:SQIsign24} is significantly more compact than the other schemes; this comes at the cost of a complex and slow construction.

\begin{table}[tb]
    \centering
    \caption{Comparison with other ZK-based signatures. Shown sizes are for the short optimization tradeoff; runtime may vary.}
    \label{tab:comparison}
    {\setlength{\tabcolsep}{3pt}
\begin{tabular}
{
ccc
S[table-format=2.2]
}
\toprule
{Scheme} 
&  {Assumption} 
& {Design} 
& {$\size{\sign}$ in [\si{\kilo\byte}]} \\
\midrule
\multirow{2}{*}{this work} & \CrossSDP  &  VOLEitH  & 4.4\\
 & \TernarySDP & TCitH & 3.1\\
\midrule
\cite{battagliola2025mpcith} &  \CROSS-\RSDP & MPCitH & 5.5 \\
\midrule
\CROSS &  \CROSS-\RSDP & CVE & 12.4 \\ 
\LESS
& Code Equivalence & sigma protocol & 1.3\\ 
\midrule
\SDitH & Syndrome Decoding & VOLEitH & 3.7 \\ 
\RYDE & Rank Decoding & TCitH & 3.1\\ 
\Mirath & MinRank & TCitH & 3.0 \\ 
\PERK & Permuted Kernel & VOLEitH & 3.5 \\ 
\MQOM & Multivariate Quadratic & TCitH & 2.8\\ 
\midrule
\multirow{2}{*}{\FAEST} & AES & \multirow{2}{*}{VOLEitH} & 4.5 \\ 
& AES-EM &  & 3.9\\ 
\midrule
\SQISign & {Supersingular Isogenies} & {sigma protocol} & 0.15\\
\bottomrule
\end{tabular}
\vspace{-2mm}
}
\end{table}

\textbf{Comparison with \cite{battagliola2025mpcith}.}
Recently  \cite{battagliola2025mpcith} proposed a new MPCitH-based protocol for \RSDP.
Instead of relying on the TCitH and VOLEitH frameworks, their construction is built from the ground up, designing an MPC protocol that computes the syndrome corresponding to a shared restricted vector and turning it into a digital signature using the classical MPCitH framework. 
The approach presented in this paper offers greater flexibility in parameter selection and achieves smaller signatures, in particular through \TernarySDP.

\section{Conclusion}

This paper presents a polynomial modeling of the restricted decoding problem \RSDP, enabling the construction of digital signatures in the TCitH and the VOLEitH frameworks.
When instantiated with ternary full-weight decoding, we obtain signature sizes that are competitive with the state-of-the-art NIST candidates.
For the restriction used by the \CROSS signature scheme, the sizes are significantly reduced compared to \CROSS.

\balance
\bibliographystyle{IEEEtran}
\bibliography{aux/abbrev3,aux/crypto_crossref,aux/ref}

\end{document}